
\documentclass[10pt,journal,compsoc]{IEEEtran}
%


%

%
\ifCLASSOPTIONcompsoc
  \usepackage[nocompress]{cite}
\else
  \usepackage{cite}
\fi
%

%
\ifCLASSINFOpdf
\else
\fi
\hyphenation{op-tical net-works semi-conduc-tor}

\usepackage{flushend}
\usepackage{amsmath}
\usepackage{amsthm}
\usepackage{amsfonts}
\usepackage{amssymb}
\usepackage{lipsum}                     
\usepackage{xargs}   

\usepackage[pdftex,dvipsnames]{xcolor}  
\usepackage{algorithm}
\usepackage{algorithmic}
\usepackage{graphicx}
\usepackage{caption}
\usepackage{color}

\newtheorem{theorem}{Theorem}

\newtheorem{definition}{Definition}
\newtheorem{example}{Example}
\newtheorem{assumption}{Assumption}

\captionsetup[figure]{font=footnotesize}
\usepackage{subcaption}

\setcounter{MaxMatrixCols}{30}
\providecommand{\U}[1]{\protect\rule{.1in}{.1in}}
\usepackage[colorinlistoftodos,prependcaption,textsize=tiny]{todonotes}
\newcommandx{\comment}[2][1=]{\todo[linecolor=red,backgroundcolor=red!25,bordercolor=red,#1]{#2}}

\begin{document}

%
\title{Broadcast Distributed Voting Algorithm in Population Protocols}
%
%
%
%

\author{Hamidreza~Bandealinaeini and Saber~Salehkaleybar,~\IEEEmembership{Member,~IEEE}
\thanks{Both authors are with Learning and Intelligent Systems Laboratory, Department
	of Electrical Engineering, Sharif University of Technology, Tehran, Iran.
	E-mails: hr\_bandeali@ee.sharif.edu saleh@sharif.edu
	(corresponding author). Webpage: http://lis.ee.sharif.edu.}}
\IEEEtitleabstractindextext{%
\begin{abstract}
We consider the problem of multi-choice majority voting in a network of $n$ agents where each agent initially selects a choice from a set of $K$ possible choices. The agents try to infer the choice in majority merely by performing local interactions. Population protocols provide a framework for designing pairwise interactions between agents in order to perform tasks in a coordinated manner. In this paper, we propose ``Broadcasting Population Protocol" model as a counterpart model of conventional population protocols for the networks that each agent can send a message to all its neighbors simultaneously. We design two distributed algorithms for solving the multi-choice majority voting problem in the model of broadcasting population protocols. We prove the correctness of these algorithms and analyze their performance in terms of time and message complexities. Experiments show that the proposed algorithm improves both time and message complexities significantly with respect to previous algorithms proposed in conventional population protocols and they can be utilized in networks where messages can be transmitted to a subset of agents simultaneously such as wireless networks. 
\end{abstract}

\begin{IEEEkeywords}
Population protocols, majority voting, distributed inference, gossip algorithms.
\end{IEEEkeywords}}

\maketitle

\IEEEdisplaynontitleabstractindextext

%
\IEEEpeerreviewmaketitle

\IEEEraisesectionheading{\section{Introduction}\label{sec:introduction}}

%
%
%
%
\IEEEPARstart{C}{ommunication} protocols can be categorized into two main groups \cite{lynch1996distributed}: shared-memory and message-passing. In shared-memory protocols, there exist memories which are shared among agents and they can write on these memories and read from to communicate with each other. In message-passing protocols, agents in a network communicate with each other by transmitting messages through channels. In multi-agent systems with limited resources, message-passing protocols can be further divided into three main models \cite{navlakha2015distributed}: Beeping, Stone-age, and Population Protocols. In beeping models, agents transmit messages by sending beep signals in synchronous time slots. Thus, in each time slot, agents can send beep signals or be silent. In Stone-age model, agents are represented by finite state machines and they use rich size messages for communication but their memories are restricted to a limited capacity. In population protocols, agents are modeled as finite state machines with limited resources that interact in a pairwise manner to update their states. The goal is to compute a function globally in the network by performing pairwise interactions between agents. Recently, population protocols are considered as a mathematical model to discover the computational power of biological information processing systems such as chemical reaction networks \cite{chen2014deterministic} and gene regulatory networks \cite{navlakha2015distributed}.
In the literature of population protocols, several distributed algorithms have been proposed in the model of population protocols for various problems like leader election \cite{alistarh2015polylogarithmic,doty2018stable,angluin2005self, berenbrink2018simple}, network size approximation and counting \cite{aspnes2016time,beauquier2015space,izumi2014space,beauquier2007self}, and majority voting \cite{angluin2008simple,benezit2011distributed,salehkaleybar2015distributed,gasieniec2017deterministic, alistarh2018space, alistarh2015fast}.


In the problem of distributed majority voting which is the main focus of this paper, agents have initial choices, and they interact with each other until their states converge to a choice which is in majority among initial choices. Due to limited resources of agents in population protocols, most of previous works have focused on designing automaton with small memory. In \cite{angluin2008simple}, Angluin et al. proposed a 3-state automaton for the problem of binary majority voting and showed that agents can approximately find the majority vote in the sense that with high probability, they reach consensus on the majority vote if the difference between number of supporters of the two choices is large enough. Later, B\'en\'ezit et al.\cite{benezit2009interval} proposed an automaton with four states that can solve the problem of binary voting exactly. They also extended the idea in designing 4-state automaton to ternary and quaternary voting with 15 and 100 states, respectively \cite{benezit2011distributed}. For more than four choices, they also suggested to execute binary voting between any pair of choices in parallel which requires $O(2^{K(K-1)})$ number of states per agent where $K$ is the number of choices. Salehkaleybar et al. \cite{salehkaleybar2015distributed} proposed Distributed Multi-Choice Voting/Ranking (DMVR) algorithm which reduces the number of states to $O(K\times 2^{K-1})$ for the problem of majority voting. In the proposed algorithm, the interaction between pair of agents is simply based on intersection and union operations. Moreover, the proposed algorithm is state-optimal for the problem of ranking. Later, Alistarh et al.\cite{alistarh2017time} showed a trade-off between the number of states per agent and convergence time for the binary majority voting problem. In particular, the problem can be solved in time $O(\log(n)/s+\log(n)\log(s))$ where $n$ and $s$ are the number of agents and number of states per agent, respectively. 
Recently, Gasieniec et al.\cite{gasieniec2017deterministic} designed an automaton with $O(K^6)$ number of states to solve majority voting by running multiple binary voting algorithms in parallel.

In this paper, we first introduce ``Broadcasting Population Protocol" model in which each agent can interact with all its neighbors simultaneously instead of only one of them as it is in conventional population protocol model. This model can be considered in networks where each agent can send its massage to all its neighbor simultaneously such as transmitting signals in a wireless medium. Due to simple structure of interactions in the DMVR algorithm \cite{salehkaleybar2015distributed}, we can adopt this algorithm for the broadcast setting which call it ``Broadcast Distributed Multi-Choice Voting  (B-DMV)" algorithm. Experiments show that the broadcast version of DMVR improves average running time significantly with respect to pairwise version. However, the number of transmitted messages in B-DMV is still close to the pairwise one. Thus, we propose two variants of B-DMV algorithm which improve both time and message complexities dramatically. We prove the correctness of these algorithms and analyze their performance in terms of time and message complexities. We also provide experiment results of the proposed algorithms for different network topologies and network sizes.

The rest of this paper is organized as follows. In Section 2, we present the model of broadcasting population protocols and describe B-DMV algorithm. We also provide the correctness of this algorithm. In Section 3, we propose two variants of B-DMV and show the correctness of these algorithms. In Section 4, we analyze time and message complexities of the proposed algorithms. We conduct experiments to evaluate
the performance of the proposed algorithms in
Section 5 and conclude in Section 6.
 

\section{Model}
\subsection{Population Protocol}

Consider a network of agents whose programs are identical and can be represented as a finite state machine. Let $G=(V,E)$ be the graphical representation of the network where $V=\{1,\cdots,n\}$ is the set of vertices (corresponding to the agents) and $E\subseteq V\times V$ is the set of edges in which $(i,j)\in E$ if agent $i$ can directly interact with agent $j$. We denote the set of neighbors of agent $i$ by $N(i)$. The goal is to carry out a computation like leader election, consensus, or majority voting by merely local interactions.

In population protocol, agents initially get their input values from a specific set $\mathcal{X} $. Then by applying input function like $\mathcal{I:X\rightarrow Q}$, their initial states are obtained from their inputs where $\mathcal{Q}$ is the set of states in each agent. 
We represent the states of agents in a vector $C=[q_1,\cdots,q_n]$ and call it ``Configuration Vector" where $q_i$ is the state of agent $i$. We assume that there exists a scheduler which picks  two neighbor agents at each time $t$ to interact with each other. 
Suppose that agent $i$ with state $q_i$ and agent $j$ with state $q_j$ are selected to have an interaction. These two agents update their states to $q'_i, q'_j$ respectively according to a transition rule like $\delta : Q \times Q \longrightarrow Q \times Q$. Thus $i$-th and $j$-th entries in the configuration vector are changed from $q_i$ and $q_j$ to $q'_i$ and $q'_j$, respectively.
The execution of population protocol is an infinite sequence of configurations $C_0, C_1, ... $ where $C_0$ is the initial configuration vector and each $C_{i+1}$ is obtained from $C_i$ by just one pairwise interaction. During the execution, at each time $t$, by applying the output function $\mathcal{O:Q\rightarrow Y}$ to the states of the agents, the outputs of agents can be obtained where $\mathcal{Y}$ is the set of outputs.

In population protocols, it is commonly assumed that the scheduler is fair. More specifically, for any configuration $C'$ that can be obtained from a configuration $C$ by a pairwise interaction, if configuration $C$ is infinitely often appears in the execution, then the configuration $C'$ should also appear infinite time in execution.
Please note that this property is necessary for convergence of network. Otherwise, the scheduler can separate the network into two or more disjoint components such that there is no interaction between components. Herein, to have rigorous analysis of time and message complexities, it is assumed that each agent has a local clock which ticks according to a Poisson process with rate one. When a local clock of agent $i$ ticks at time $t$, it wakes up and selects one of its neighbor $j\in {N(i)}$ randomly to interact with it. By this assumption, the fairness property of the scheduler is guaranteed. 

In the next part, we introduce a new kind of population protocol model where agent interacts with all of its neighbors. In order to distinct between the new model with conventional population protocol, we call the former and latter ones \textbf{\textit{``Pairwise PP"}} and \textbf{\textit{``Broadcasting PP"}}, respectively.


\subsection{Broadcasting Population Protocol}

In broadcasting PP, similar to pair PP, each agent applies input function in order to obtain its initial state. There exits a scheduler that picks one of the agents at each time to interact with all of its neighbors simultaneously. All agents involving in the interaction update their states according to a transition rule $\delta_B^d : \underbrace{Q\times \cdots \times Q}_\text{d+1 times} \longrightarrow  \underbrace{Q\times \cdots \times Q}_\text{d+1 times}$ where $d=|N(i)|$. Please note that for each value of $d$, the specific transition rule $\delta_B^d$ is considered. The configuration vector and an execution of broadcasting PP can be defined to the ones in pairwise PP similarly.



\subsection{Problem Definition}

Suppose that there are $K$ different choices $\mathcal{}C = \{ c_1,...,c_K\}$ where each agent initially chooses one of them as its input. Let $\# c_k$ be the number of agents that select the choice $c_k$ and $\rho_k \triangleq \frac{\#c_k}{n}$. The goal in majority voting problem is to find the choice with the most number of supporters; i.e, finding the index $k$ such that argmax$_k$ $\rho_k $.

In the following, we first describe DMVR algorithm \cite{salehkaleybar2015distributed} which can be utilized for solving majority problem in population protocol. Then we adopt this algorithm for broadcasting PP model and call it B-DMV algorithm.

\subsection{Description of DMVR Algorithm}

Assume that each agent $i$ has a value set $\mathcal{}v_i$ and a memory set $\mathcal{}m_i$ which are denoted by $\mathcal{}v_i(t)$ and $\mathcal{}m_i(t)$ at time $t$, respectively. At time $t=0$, both of these sets for each agent $i$ are set to the choice that agent has chosen. For example if agent $i$ chooses $c_1$, then $v_i(0)=\{c_1\}$ and $m_i(0)=\{c_1\}$ (see Algorithm \ref{Alg:1}). In each interaction, the value sets of agents $i$ and $j$ participating in the interaction are updated to the union and intersection of current value sets in these agents (see lines 4-7). Moreover, their memory sets are also set to their updated value sets if the size of corresponding value set in each agent is equal to one. 

\begin{algorithm}
	\caption{The Distributed Multi-choice Voting/Ranking Algorithm}
	\begin{algorithmic}[1] 
		\STATE Initialization : $v_i(0)=$ Initial vote of agent $i$, $m_i(0)=$ Initial vote of agent $i$\;
		\IF{agent $i$'s clock ticks at time $t$}
		\STATE agent $i$ contacts with a random neighbor, say agent $j$
		\IF{$|v_i(t)| \leq |v_j(t)|$}
		\STATE $v_i(t^+) := v_i(t)\cup v_j(t),  v_j(t^+) := v_i(t) \cap v_j(t).$
		\ELSE
		\STATE $v_i(t^+) := v_i(t)\cap v_j(t),  v_j(t^+) := v_i(t) \cup v_j(t).$
		\ENDIF
		\IF{$|v_i(t^+)|==1$}
		\STATE $m_i(t^+)=v_i(t^+)$
		\ENDIF
		\IF{$|v_j(t^+)|==1$}
		\STATE $m_j(t^+)=v_j(t^+)$
		\ENDIF
		\ENDIF
	\end{algorithmic}
\label{Alg:1}
\end{algorithm}

\begin{example}
	In this example, we describe input function $\mathcal{I} $, output function $\mathcal{O} $, and transition function $\mathcal{\delta} $ for the population protocol of DMVR algorithm which solves binary majority voting. 
	Consider the set of choices $\mathcal{X}=\{ c_1, c_2 \}$. The set of states is $\mathcal{Q} = \{ 0, 1, 1^+, 1^- \}$ in which state $0$ corresponds to value set $\{0\}$ and memory set $\{0\}$, state $1$ corresponds to value set $\{1\}$ and memory set $\{1\}$, state $0^+$ corresponds to value set $\{0, 1\}$ or $\emptyset$ and memory set $\{0\}$, and state $1^+$ corresponds to value set $\{0, 1\}$ or $\emptyset$ and memory set $\{1\}$. The input function $\mathcal{I}$ is applied to the input set as follows: $\mathcal{I}: c_1\rightarrow0 , c_2\rightarrow1$. The four rows from the transition rule table (out of 16 possibilities)  are: $(0,1) \rightarrow (0^+,1^+)$ , $(1,0) \rightarrow (1^+,0^+)$ , $(0,0) \rightarrow (0,0)$ , and $(1,1) \rightarrow (1,1)$. At each time the output function $O$ can apply to the states of agents and it maps $ 0,0^+$ to $c_1$ and $1,1^+$ to $c_2$. 
\end{example}

There exist two phases in the course of executing DMVR algorithm. In the first phase, the algorithm tries to reduce the number of value sets of size one by performing union and intersection operations between value sets of agents. More specifically, suppose that two agents $i$ and $j$ having value sets of size one, i.e., $|v_i(t)| = |v_j(t)| =1$, and $v_i(t) \neq v_j(t)$. If at time $t$ these two agents interact with each other, the number of value sets of size one in the network, will be decreased by two. This process continues until we reach to a configuration in which there is no value set of size one from the choices in minority and each value set is a subset of all value sets with larger size \cite{salehkaleybar2015distributed}. Furthermore, all the subsequent configurations have the above mentioned property. For example, consider two choices $c_1$ and $c_2$ and assume that the choice $c_1$ is in the majority. At the end of the first phase, the value sets of different agents can only be $\{\}$, $\{c_1\}$, and $\{c_1,c_2\}$. Due to the fact that $\{\}\subset\{c_1\}\subset\{c_1,c_2\}$, the union and intersection operation does not produce any new value set. In the second phase, the remaining value sets of size one for the majority will be spread through the network by updating memories of agents. Please note that updating value sets and memories are performed in parallel, but when all of the value sets of choices in minority are eliminated, the true answer will be only disseminated among agents' memories in the second phase.

\section{Description of B-DMV Algorithm}

In this section we first describe a naive solution called B-DMV algorithm in order to adopt DMVR algorithm to Broadcasting PP model. Experiments show that B-DMV improves average runtime while number of transmitted message does not improve considerably with respect to pairwise version. Therefore, we propose two modified versions of B-DMV algorithm and prove the correctness of these algorithms. Furthermore, we analyze their time complexities by mean-field theorem and show that the proposed algorithms improve the time and message complexities significantly.

\subsection{B-DMV Algorithm}

When an agent $i$ wakes up, it gathers the value sets of all its neighbors, $N(i)$. Then, it consolidates the value sets as follows: 
First, it counts the number of occurrence of each choice $c_k$ in all the received value sets which is denoted by $n_k$. 
Then it constructs a new temp value set containing choices with $n_k > 0$. Then it decreases all these $n_k's$ by one and send the temp value set to one of its neighbor agents that have not already received a new value set, randomly. It continues this procedure until all agents in $N(i)\cup \{i\}$ received their value sets. 
As it has been mentioned before, interactions in broadcasting PP are among one agent and all of its neighbors. 

The description of B-DMV algorithm is given in Algorithm \ref{alg:BDMV}.
\begin{algorithm}[t]
	\caption{The B-DMV Algorithm}
	\begin{algorithmic}[1]
		\STATE Initialization : $v_i(0)=$ Initial vote of agent $i$, $m_i(0)=$ Initial vote of agent $i$\;
		\IF{Agent $i$'s clock ticks at time $t$}
		\STATE Agent $i$ contacts with all of its neighbors, i.e., $N(i)$.
		\STATE Consider $v_i$ as $v_{i_0}$ and the value sets of $N(i)$ as $v_{i_1},...,v_{i_d}$ 
		\STATE $n_k := | \{j|c_k \in v_j, j=i_0,...,i_d\} |$ \label{a2}
		\STATE {$S := N(i) \cup \{i\}$}
		\FOR {$u=1$ \TO $d_i+1$ } 
		\STATE {$temp := \{ \}$}
		\FOR{$k=1$ \TO $K$ } \label{a3}
		\IF{$n_k \neq 0 $}
		\STATE {$n_k := n_k - 1$} 
		\STATE {$temp := temp \cup \{c_k\}$} 
		\ENDIF
		\ENDFOR
		\STATE {Choose one of the member of $S$ randomly as $s$} \label{a4}
		\STATE {$v_s(t^+) := temp$} \label{b3}
		\IF{$|temp|==1$}
		\STATE $m_s(t^+):=temp$
		\ENDIF
		\STATE {$S := S \backslash \{s\}$} 
		\ENDFOR
		\ENDIF
	\end{algorithmic}
\label{alg:BDMV}
\end{algorithm}
In the $for$ loop in line~\ref{a3}, the algorithm tries to make a new value set from the choices which were in the previous value sets of agents involved in the interaction and gives it to one of these agents randomly if it has not received any new value set before (line~\ref{a4}).

In the following, we first prove the correctness of B-DMV algorithm. Then, we compare its performance with the pairwise version.

\begin{definition}
	A collection of sets is stable if for any $v_i(t)$ and $v_j(t)$ in the collection, we have: $ v_i(t)\subseteq v_j(t)$  if $|v_i(t)| \leq |v_j(t)|$.
\end{definition}

\newtheorem{mylemma1}{Lemma}
\begin{mylemma1}
	After broadcasting interaction at agent $i$, the collection of value sets $N(i) \cup \{i\}$ is stable. 
	\label{lemma1}
\end{mylemma1}

\newtheorem{myproof1}{Proof}
\begin{proof}
	According to the description of assigning new value sets to agents in a broadcasting interaction, (lines~\ref{a2} to~\ref{b3} of B-DMV algorithm), each new value set is subset of new value sets with larger size. Therefore, this collection of new value sets is stable.
\end{proof}

In the following, we show that from any initial configuration, the algorithm eventually converges to a stable configuration after finite number of interactions. To do so, we define the following Lyapunov function on configuration vector:
\begin{equation}
V(C)= nK^2-\sum_{i=1}^n |v_i|^2,
\end{equation}
where $C=[v_1,\cdots,v_n]$  and show that the expected change in Lyapunov function after each interaction is negative.

\begin{theorem}
	B-DMV algorithm eventually outputs the correct result in finite number of interactions.
	\label{th:main}
\end{theorem}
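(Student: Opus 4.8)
The plan is to run the Lyapunov argument the statement hints at and then combine it with two conserved quantities to pin down the terminal configuration. First I would record elementary bounds on $V$: since $\sum_{i=1}^n|v_i|=n$ is preserved by every consolidation step and $|v_i|\le K$, the quantity $V$ takes values in the finite integer set $\{0,\dots,nK^2-n\}$, so it suffices to show $V$ is non-increasing along any interaction and strictly decreases with positive probability as long as the configuration is not yet ``good''.

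For monotonicity, fix the agent $i$ whose clock ticks and let $n_k$ be the number of participating value sets (those of $N(i)\cup\{i\}$) containing $c_k$. The loop in lines~\ref{a2}--\ref{b3} leaves every $n_k$ unchanged (each $c_k$ is redistributed to exactly the first $n_k$ of the new sets) and outputs the nested collection of Lemma~\ref{lemma1}. Reading membership as a $0/1$ matrix with column sums $n_k$, the row-size sequence of that nested (``staircase'') matrix majorizes the row-size sequence of any $0/1$ matrix with the same column sums; since $x\mapsto\sum_u x_u^2$ is strictly Schur-convex, $\sum_{j\in N(i)\cup\{i\}}|v_j|^2$ never decreases, hence $V$ never increases, and it strictly increases, by at least one since these are integers, exactly when the collection on $N(i)\cup\{i\}$ is not already stable in the sense defined above. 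As the scheduler selects each agent with positive probability, whenever the configuration has \emph{any} unstable closed neighborhood there is a positive-probability interaction that lowers $V$ by a positive integer; since $V$ is a bounded non-negative integer this happens only finitely often, so almost surely the execution reaches, in finitely many interactions, a regime where $V$ is constant forever.

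Finally I would identify that regime. A second invariant is that for every $c_k$ the number of agents whose value set contains $c_k$ equals $\#c_k$ throughout (membership counts are untouched by union/intersection, hence by each consolidation). Once $V$ is constant, every interaction preserves the multiset of participating value sets, hence the global multiset of value sets, and forces every closed neighborhood to be a chain; together with connectivity of $G$ and the two invariants this means the global collection of value sets must be the ``canonical'' one compatible with a minimum of $V$, in which each value set is $\emptyset$, equal to $\{c_1\}$, or a proper superset of $\{c_1\}$, with at least one set equal to $\{c_1\}$ exactly (here $c_1$ is the — assumed unique — majority; this is the broadcasting counterpart of the end-of-phase-one property of DMVR). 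In such a configuration $c_1$ lies in strictly more participating sets than any other choice whenever the ticking neighborhood contains an agent currently holding $\{c_1\}$, so the consolidation emits the singleton $\{c_1\}$ and writes it into some participant's memory (the memory-update lines of Algorithm~\ref{alg:BDMV}); no other singleton value set is ever emitted, so a memory once equal to $\{c_1\}$ stays so. By fairness the agent(s) holding $\{c_1\}$ keep passing it around $G$ until every agent's memory is $\{c_1\}$, at which point $\mathcal{O}$ returns $c_1$ at every agent.

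\textbf{Main obstacle.} The Lyapunov bookkeeping of the first two paragraphs is routine; the delicate step is the claim in the third that a configuration in which every closed neighborhood is a chain but the \emph{global} collection of value sets is not cannot be terminal. One must show that a bounded sequence of $V$-preserving reshuffling interactions can always drag two incomparable value sets into a common neighborhood, re-enabling a strict $V$-decrease, so that $V$ keeps falling until the canonical configuration is attained. This argument (using connectivity of $G$), together with the standard care needed to pass from ``positive probability at each step'' to ``almost surely in finitely many interactions'', is where the real work lies; the case of a non-unique majority must be excluded from the statement or treated separately.
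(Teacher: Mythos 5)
Your proof takes a genuinely different route from the paper's. The paper establishes the negative drift of $V$ indirectly: by Lemma~\ref{lemma1} one broadcast interaction yields, up to a permutation, the same collection on $N(i)\cup\{i\}$ as running pairwise DMVR on those agents until their sets stabilize, and it then imports Lemmas 2 and 4 of \cite{salehkaleybar2015distributed} to get $V(C')-V(C_0)\le -r\epsilon$ and closes with Foster's criterion. You instead argue directly that the consolidation in lines~\ref{a2}--\ref{b3} preserves the column sums $n_k$ and outputs the conjugate (``staircase'') arrangement, whose row-size vector majorizes that of any $0/1$ matrix with the same column sums; Schur-convexity of $\sum_u x_u^2$ then gives monotonicity of $V$, with an integer-valued strict decrease exactly when the selected closed neighborhood is not a chain (the equality case of the majorization forcing nestedness). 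This is self-contained, quantitative (a decrease of at least $1$ rather than an unspecified $\epsilon$), and replaces Foster's criterion by the elementary fact that a bounded non-increasing integer sequence stabilizes. The paper's route buys brevity by reusing the pairwise analysis; yours buys independence from that reference and a crisper strictness criterion.

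The obstacle you flag at the end is real, and you should be aware that the paper does not close it either: its drift bound is negative only when the \emph{selected neighborhood} is unstable, yet Foster's criterion is invoked as if the drift were negative whenever the \emph{global} configuration is unstable. A configuration in which every closed neighborhood is a chain but two incomparable value sets sit at distant agents has zero one-step drift under both arguments, so both proofs still owe the step that the $V$-preserving random reshuffling within neighborhoods almost surely brings two incomparable sets into a common closed neighborhood (e.g., the set of configurations at a fixed non-terminal level of $V$ is finite, and by connectivity and fairness none of its globally unstable members can be absorbing). Your identification of this, and of the need to assume a unique majority (which the paper also leaves implicit), is accurate; supplying that step would make your argument a complete proof.
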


\begin{proof}
	Based on Lemma \ref{lemma1}, assigning new value sets of agents in a broadcasting interaction results in a stable collection of new value sets for the agents involved in the interaction. Suppose that these agents execute DMVR algorithm instead of performing one broadcasting interaction. According to Lemma 4 in \cite{salehkaleybar2015distributed}, the value sets of these agents will reach to a stable collection after finite interactions, let say $r$ interactions. Assume that the initial configuration is $C_0$ and the execution path is: $C_0 \rightarrow C_1 \rightarrow ... \rightarrow C_r $.  In broadcasting version, after performing one interaction, we have: $ C_0 \rightarrow C'$ where $C'$ is a stable configuration. It can be shown that $C'$ is a permutation of $C_r$. Thus, we have: $V(C_r)=V(C')$. Moreover, according to Lemma 2 in \cite{salehkaleybar2015distributed}, in the pairwise interaction, we have: ($V(C_i) - V(C_{i-1}) \leq -\epsilon$) for any non-stable configuration $C_{i-1}$ where $\epsilon$ is a positive constant. Hence, for the broadcasting version, we can imply that:
	\begin{equation}
	\begin{split}
	V&(C_r) - V(C_0) = (V(C_r) - V(C_{r-1})) + \\&(V(C_{r-1}) - V(C_{r-2})) + ...
	+ (V(C_1) - V(C_0))\leq -r\epsilon. 
	\end{split}
	\end{equation}
	Thus, the expected change in the Lyapunov function after one broadcasting interaction is upper bounded by a negative constant if $C_0$ is not a stable configuration. Therefore, based on Foster's criteria (see \cite{asmussen2003applied}, page 21), B-DMV algorithm eventually converges to a stable configuration after finite number of interactions. In any stable configuration, there exist value sets of size one and all of them contain the choice in majority. After reaching a stable configuration, these value sets only update  memories of agents and all the memories will be eventually set to the majority choice.
\end{proof}

Like the pairwise version, B-DMV algorithm has two phases. In the first phase, the algorithm tries to reduce numbers of value sets of size one and converges to a stable configuration. In the second phase, the choice in majority will be spread through the network by updating memories of agents. By comparison of the process of the first phase in these two algorithms, one can argue that in broadcasting version, value sets of $d+1$ agents are updated  in each interaction. However, in the pairwise interaction, at each time, only value sets of two agents in the interaction are updated. So we expect that the runtime of this phase in broadcasting version would be smaller than the one of pairwise version. But in the second phase of broadcasting, since at each interaction, each value set of size one go randomly to one of the agents in interaction, so in the best-case scenario, the memory of $\frac{d}{2}$ agents in the interaction which does not have the correct value will be updated to the correct one. Intuitively, we expect that the runtime of the second phase in broadcasting does not improve a lot in comparison with the one of pairwise version. Moreover, in each interaction of broadcasting version, the number of messages which are transferred is $d+2$ (see the explanations in the beginning of Section \ref{sec:exp}) while it is just two messages in pairwise version. Hence, message complexity of B-DMV in the second phase can even increase with respect to the pairwise version. 


We evaluated these two algorithms experimentally on mesh network topology. Figure \ref{fig1} depicts the time and message complexities of both algorithms for solving the binary majority voting problem in a network with $100$ number of agents against percentage of agents selecting the choice $c_2$. Since the average number of interactions per unit of time is $n$, we define runtime (the measure of time complexity) as the number of interactions divided by $n$. 
\begin{figure*}
	\centering
	\begin{subfigure}[b]{0.32\linewidth}
		\includegraphics[width=\linewidth]{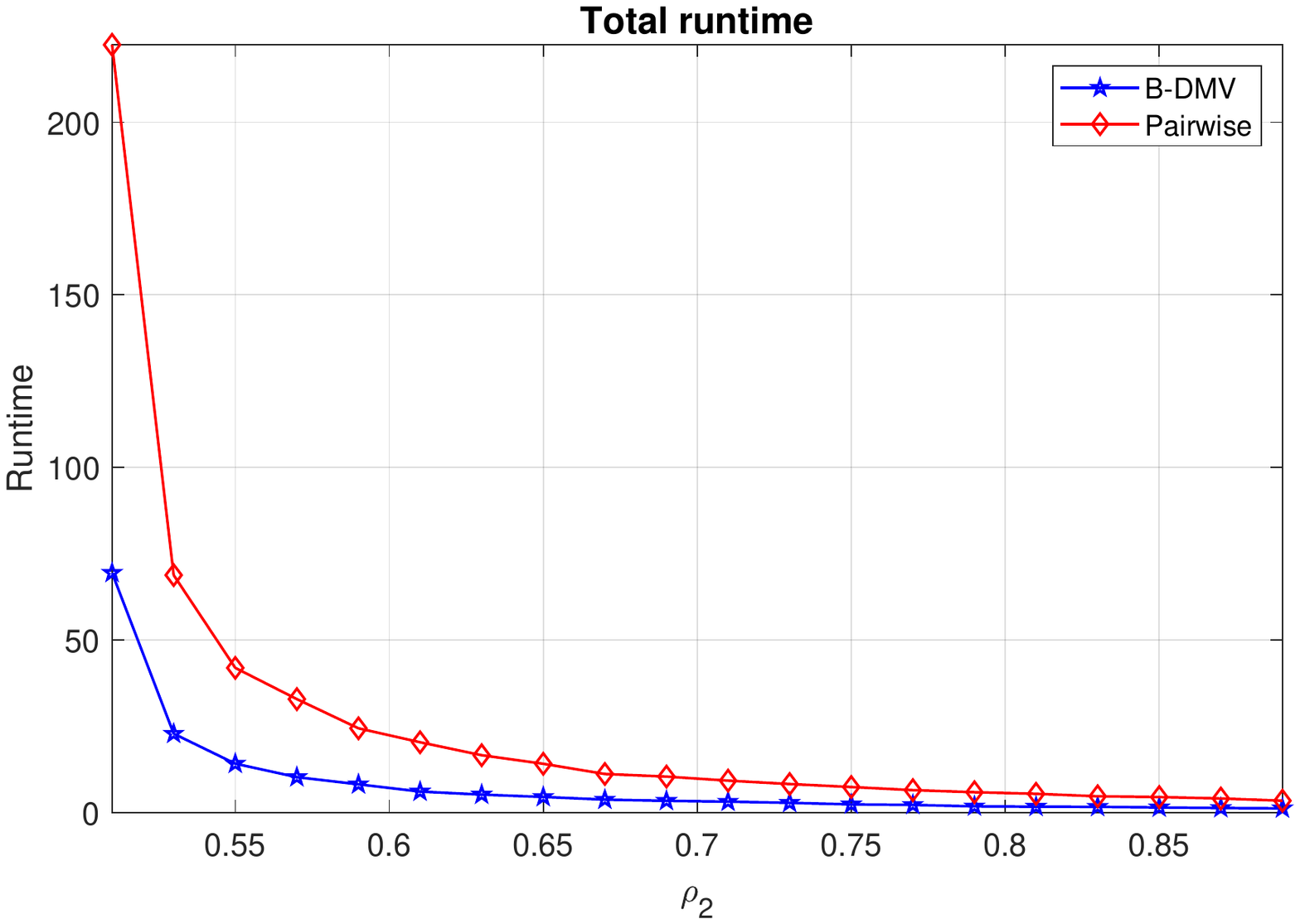}
	\end{subfigure}
	\begin{subfigure}[b]{0.32\linewidth}
		\includegraphics[width=\linewidth]{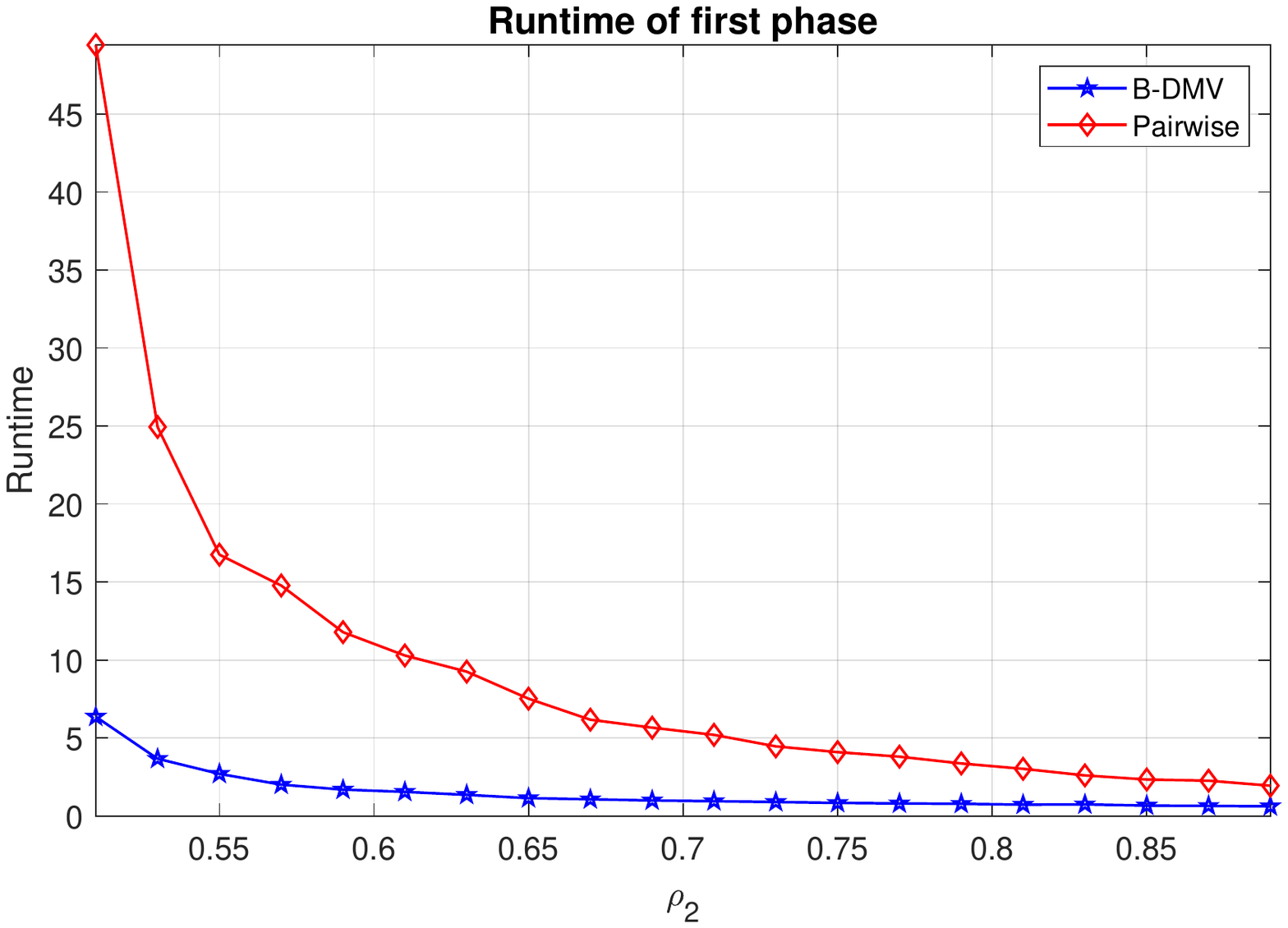}
	\end{subfigure}
	\begin{subfigure}[b]{0.32\linewidth}
		\includegraphics[width=\linewidth]{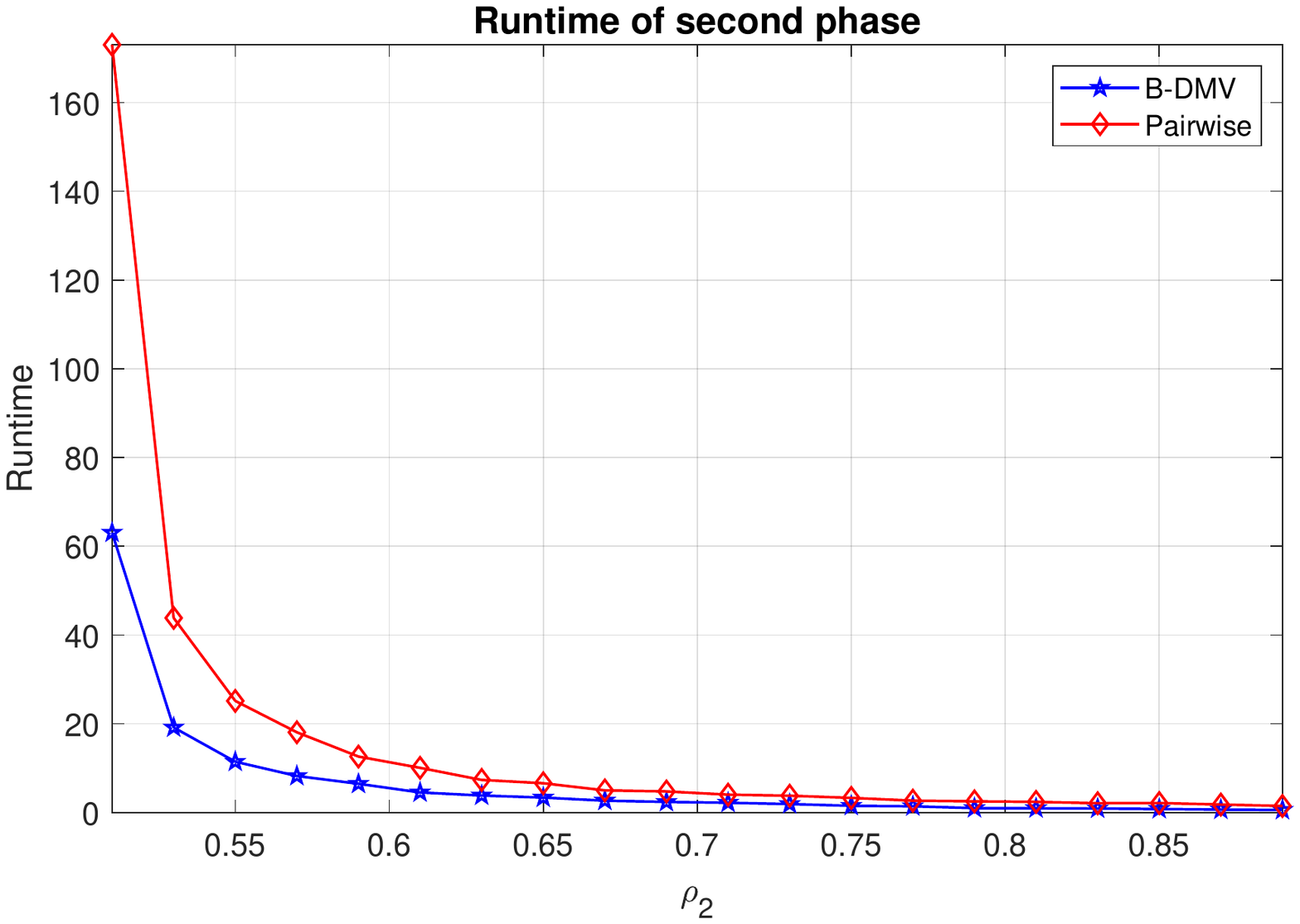}
	\end{subfigure}
	\begin{subfigure}[b]{0.32\linewidth}
		\includegraphics[width=\linewidth]{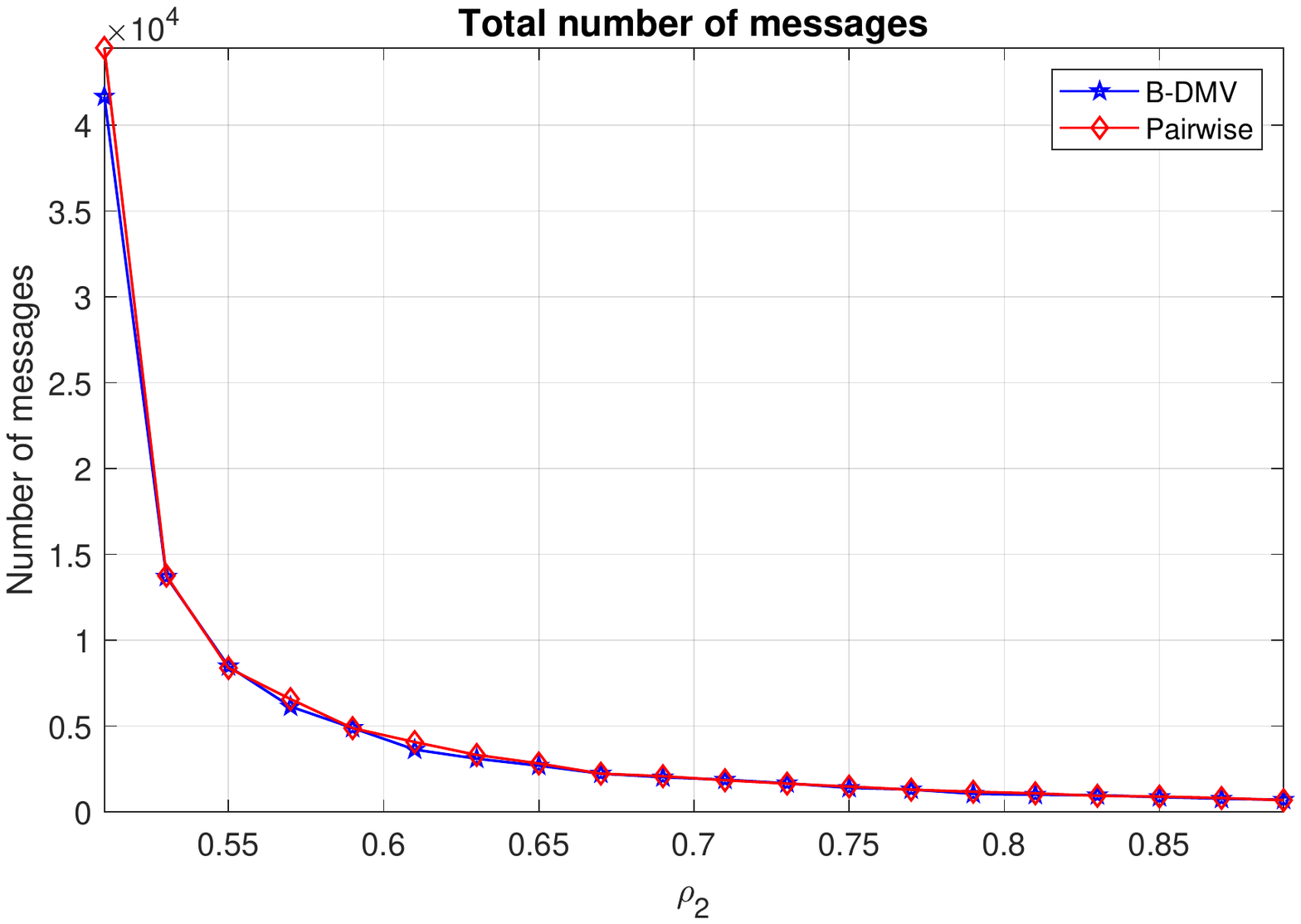}
	\end{subfigure}
	\begin{subfigure}[b]{0.32\linewidth}
		\includegraphics[width=\linewidth]{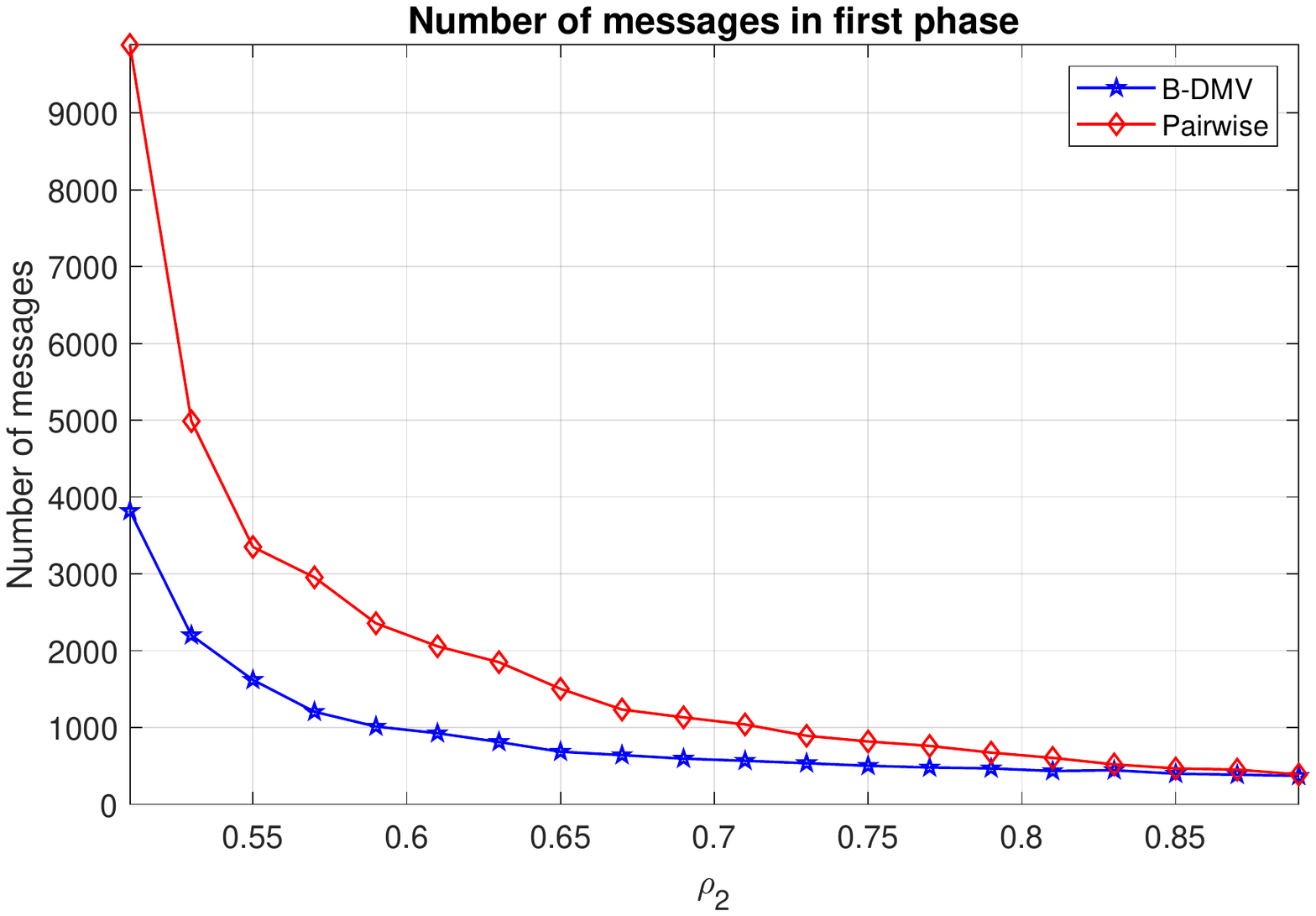}
	\end{subfigure}
	\begin{subfigure}[b]{0.32\linewidth}
		\includegraphics[width=\linewidth]{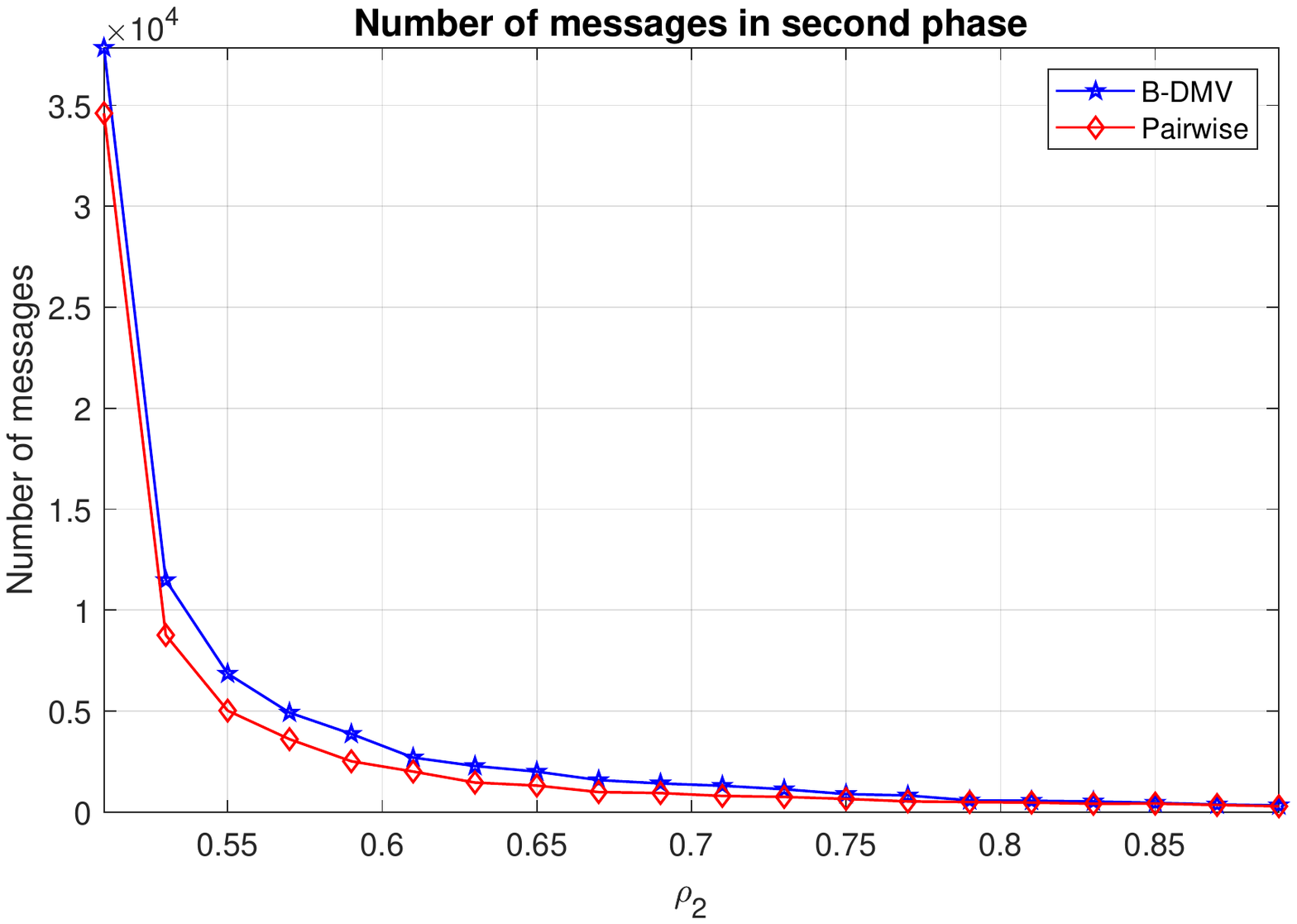}
	\end{subfigure}
	\caption{Time and message complexities of B-DMV algorithm and pairwise version (DMVR algorithm) in a mesh network with $n=100$ for $\rho_2$ in the range $0.51$ to $0.89$.}
	\label{fig1}
\end{figure*}
As can be seen, total time complexity decreases with respect to pairwise version (about $3$ times). Moreover, in the first phase, it decreases by a factor of $7$  while at the second one it decreases by a factor of $3$. As we expected the reduction of runtime in the first phase is greater than the second phase with respect to pairwise DMVR.
Furthermore, the message complexity of broadcast version is close to the one of pairwise version. In fact, as we expected in broadcast version, the message complexity of the first phase decreases. However, in the second phase of broadcast version, the message complexity is greater than the one in pairwise version.

We can conclude that the broadcast version may not be better in message complexity (especially in the second phase). In the following, two improved version of B-DMV algorithm (called accelerated versions) will be proposed in order to improve the time and message complexities.

\subsection{Accelerated B-DMV 1}

As it has been mentioned, after the first phase, few agents have value sets of size one containing the correct result and their memories are also set to it. If they take part in an interaction, their single value sets of size one will go randomly to one of the agents in the interaction and the memory of that agent also be updated to the correct result. 
If the number of agents supporting different choices are close to each other, then only a few value sets of size one remain in the network after the first phase. These value sets need to walk randomly in the network to change all the memories to the correct results which might take too much time to finish the second phase. Here we resolve this problem by changing the updating rules in memories. Whenever an agent $i$ wakes up, it picks a choice that has the most repetitions in value sets of neighbor agents and set memories of them to that choice. Otherwise, updating of memories is like B-DMV algorithm. In particular, in the second phase, instead of passing value sets of size one, we can update all memories of neighbor agents. Please note that new updating rule of memories is applied from beginning of first phase. Hence, in some interactions memories of agents might be set to choices that are not in majority. However, at the end of first phase, we know that value sets of size one only contain the correct result. Hence, in the second phase, the value sets will update any memory with wrong choice to the correct result. 

\begin{algorithm}[t]
	\caption{Accelerated B-DMV 1}
	\begin{algorithmic}[1]
		\STATE Initialization : $v_i(0)=$ Initial vote of agent $i$, $m_i(0)=$ Initial vote of agent $i$\;
		\IF{Agent $i$'s clock ticks at time $t$}
		\STATE Agent $i$ contacts with all of its neighbors, i.e., $N(i)$.
		\STATE Consider $v_i$ as $v_{i_0}$ and the value sets of $N(i)$ as $v_{i_1},...,v_{i_d}$ 
		\STATE $n_k := | \{j|c_k \in v_j, j=i_0,...,i_d\} |$
		\STATE $Cmax := \{ \}$\label{b5}
		\STATE $Cmax := \{c_l | n_l > n_j, j=1,...,k, j\neq i \} $\label{a5}
		\STATE {$S := N(i) \cup \{i\}$}
		\FOR {$u=1$ \TO $d_i+1$ } 
		\STATE {$temp := \{ \}$}
		\FOR{$k=1$ \TO $K$ } 
		\IF{$n_k \neq 0 $}
		\STATE {$n_k := n_k - 1$}
		\STATE {$temp := temp \cup \{c_k\}$} 
		\ENDIF
		\ENDFOR
		\STATE {Choose one of the member of $S$ randomly as $s$}
		\STATE {$v_s(t^+) := temp$} 
		\IF{$Cmax \neq \{ \}$}\label{a6}
		\STATE {$m_s(t^+):= Cmax $}
		\ELSIF {$|temp|==1$}
		\STATE $m_s(t^+):=temp$
		\ENDIF\label{a7}
		\STATE {$S := S \backslash \{s\}$} 
		\ENDFOR
		\ENDIF
	\end{algorithmic}
	\label{alg:bdmv1}
\end{algorithm}

The description of Accelerated B-DMV 1 is given in Algorithm \ref{alg:bdmv1}. As can be seen, this version is similar to the B-DMV and only the updating rule of memories is changed. In lines~\ref{b5} -~\ref{a5}, we construct a set $Cmax$ and add member if there is a choice like $c_l$ in value sets of agents of an interaction that have the most repetition. In lines~\ref{a6}-~\ref{a7}, the algorithm updates memories of all agents involved in the interaction.

\subsection{Accelerated B-DMV 2}

Similar to Accelerated B-DMV 1, the proposed algorithm in this part only changes the updating rule of memories in B-DMV. In the case that no choice in value set of agents in an interaction is repeated strictly more than the other choices, the algorithm checks the memories to see whether there is a choice with highest repetition. If so, the memory of all agents in the interaction will be updated to that choice. 

In the next section, we will see that the Accelerated B-DMV 2 outperforms the first version in terms of message complexity and it is also better in time complexity. In the remainder of this part, we will show that Accelarted B-DMV 2 outputs the correct result for binary majority voting problem under some mild assumptions.
Assume that there are only two choices $c_1$ and $c_2$ and $c_2$ is in majority. At any time, we can partition agents into four groups:
$i)$ The group $V_1$ : the sets of agents whose value sets are $ \{ c_1 \}$. 
$ii)$ The group $V_2$ : the sets of agents whose value sets are $ \{ c_2 \}$. 
$iii)$ The group $M_1$ : the sets of agents whose value sets are not $\{c_1\}$ or $\{c_2\}$ and their memories are equal to $ \{ c_1 \}$. 
$iv)$ The group $M_2$ : the sets of agents whose value sets are not $\{c_1\}$ or $\{c_2\}$ and their memories are equal to $ \{ c_2 \}$. 

Let $v_1$, $v_2$, $m_1$, and $m_2$ be the size of sets $V_1$, $V_2$, $M_1$, and $M_2$, respectively.
Based on Theorem \ref{th:main}, the first phase eventually finishes in a finite number of interactions since updating rules of value sets are not changed in Accelerated B-DMV 2. Thus, it is just need to show that the number of agents having memory of $\{ c_1 \}$ converges to zero in the second phase. In other words, the output of all agents would be the choice in majority. To do so, for any configuration $C$, we define a Lyapanov function $L(C)=m_1$. We will show that the function $L(C)$ converges to zero after finite interactions. According to Foster's Criteria \cite{asmussen2003applied}, we need to prove that the expectation of $L(C^+)-L(C)$ is upper bounded by a negative constant where $C^+$ is any configuration that can be reached by $C$ through a broadcasting interaction. For simplicity of analysis, we consider mesh networks. Moreover, we assume that the following assumption holds true in our analysis.

\begin{assumption}
	At any time, we assume that the four groups $V_1$, $V_2$, $M_1$, and $M_2$ are uniformly distributed in the network and $m_2 > m_1$ in the second phase.
	\label{assum:main}
\end{assumption}

Experiments showed that the both assumptions mentioned above are fairly valid during the execution of the proposed algorithm.

\begin{theorem}
	Under Assumption \ref{assum:main}, in Accelerated B-DMV 2, the number of agents having memories of $\{ c_1 \}$ converges to zero after finite number of interactions.
\end{theorem}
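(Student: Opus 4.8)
The plan is to reuse the template of the proof of Theorem~\ref{th:main} and invoke Foster's criteria, now with the Lyapunov function $L(C)=m_1$. Because Accelerated B-DMV~2 leaves the value-set update rules untouched, Theorem~\ref{th:main} already guarantees that after finitely many interactions the execution reaches a configuration in which the collection of value sets is stable; for binary voting with $c_2$ in the majority this means, exactly as in \cite{salehkaleybar2015distributed}, that every value set is one of $\emptyset$, $\{c_2\}$, $\{c_1,c_2\}$, that the number of agents whose value set equals $\{c_2\}$ stays constant and equal to the margin $\#c_2-\#c_1\ge 1$, and hence that every agent lies in exactly one of $V_2$, $M_1$, $M_2$. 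It then suffices to show $E[L(C^+)-L(C)\mid C]<0$ for every such configuration with $m_1\ge 1$: since the set of configurations is finite, strict negativity at each of them automatically supplies the uniform negative constant that Foster's criteria require, so no quantitative drift bound is needed, and $\{m_1=0\}$ is absorbing so that convergence is genuine.

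Next I would classify a broadcasting interaction started by agent $i$ by the groups of the $d+1$ participants $N(i)\cup\{i\}$, writing $a$ and $b$ for the numbers of participants in $M_1$ and $M_2$. If some participant lies in $V_2$ (event $A$), then $n_2-n_1$ equals the number of such participants, so $c_2$ is the strict value-set majority, $Cmax=\{c_2\}$, every participant memory is set to $c_2$, no agent enters $M_1$, and exactly the $a$ members of $M_1$ leave it. If no participant lies in $V_2$ (event $\overline{A}$), then $n_1=n_2$, no singleton value set is produced, and the algorithm falls back to the memory rule, which sets all participant memories to $c_1$ when $a>b$ (raising $m_1$ by $b=d+1-a$), to $c_2$ when $b>a$ (lowering $m_1$ by $a$), and does nothing when $a=b$. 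Thus
\begin{equation}
L(C^+)-L(C)=
\begin{cases}
-a, & \text{on } A,\\
-a, & \text{on } \overline{A} \text{ with } b>a,\\
+b, & \text{on } \overline{A} \text{ with } a>b,\\
0, & \text{on } \overline{A} \text{ with } a=b,
\end{cases}
\end{equation}
so the only event that can increase $L$ is $\overline{A}\cap\{a>b\}$, which moreover forces $a>(d+1)/2$.

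I would then take expectations using Assumption~\ref{assum:main}. Uniform spread of the groups lets one treat the $d+1$ participant slots as approximately independent, each falling in $V_2$, $M_1$, $M_2$ with probabilities $v_2/n$, $m_1/n$, $m_2/n$; conditioned on $\overline{A}$, each non-$V_2$ slot is in $M_1$ with probability $m_1/(m_1+m_2)<\tfrac12$ and in $M_2$ with the complementary probability, because $m_2>m_1$. Hence on $\overline{A}$ the count $a$ is stochastically dominated by $b$, and a coupling/symmetry comparison gives $E[a\,\mathbf{1}\{b>a\}\mid\overline{A}]\ge E[b\,\mathbf{1}\{a>b\}\mid\overline{A}]$, so the contribution of $\overline{A}$ to the drift is $\le 0$. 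The contribution of $A$ is $-P(A)\,E[a\mid A]$, which is strictly negative whenever $m_1\ge 1$: $P(A)>0$ since at least one $\{c_2\}$-singleton persists, and on a mesh ($d\ge 2$) a neighbourhood containing a $V_2$ agent also contains an $M_1$ agent with positive probability when $m_1\ge 1$. Therefore $E[L(C^+)-L(C)\mid C]<0$, and Foster's criteria (\cite{asmussen2003applied}, page~21), just as in the proof of Theorem~\ref{th:main}, imply that $m_1$ hits $0$ after finitely many interactions; the surviving $\{c_2\}$-singletons keep it there, so every agent eventually outputs the majority choice.

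The step I expect to be the main obstacle is the comparison on $\overline{A}$: turning the bare inequality $m_2>m_1$ (which may hold only by~$1$) into a genuine $E[a\,\mathbf{1}\{b>a\}\mid\overline{A}]\ge E[b\,\mathbf{1}\{a>b\}\mid\overline{A}]$, while also dealing with the degenerate participant profiles (all value sets empty, or $m_1$ too small for $a>b$ to be possible at all) and with the fact that the ``uniform and independent neighbourhood'' picture is only the approximation licensed by Assumption~\ref{assum:main}. Finiteness of the configuration space is precisely what makes this qualitative, non-uniform drift estimate sufficient to conclude.
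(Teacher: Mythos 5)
Your proof is correct and ends at the same place as the paper's, but the decisive step --- establishing that the drift of $L(C)=m_1$ is negative --- is carried out by a genuinely different route. The paper specializes to the mesh ($d=4$), enumerates all $21$ multinomial assignments of $(v_2^i,m_1^i,m_2^i)$, sums $P\cdot\Delta$ explicitly, and then checks term by term that the resulting polynomial is non-positive when $m_2>m_1$. You instead condition on whether a $V_2$ agent participates: on that event $n_2-n_1=v_2^i>0$, so $Cmax=\{c_2\}$ and $m_1$ can only decrease; on the complement the change is $+b\,\mathbf{1}\{a>b\}-a\,\mathbf{1}\{b>a\}$ with $a\sim\mathrm{Bin}(d+1,\,m_1/(m_1+m_2))$ and $b=d+1-a$, and the reflection inequality $P(a=k)\ge P(a=d+1-k)$ for $k<(d+1)/2$ (which holds precisely because $m_1<m_2$) gives $\sum_{k<(d+1)/2}kP(a=k)\ge\sum_{k<(d+1)/2}kP(a=d+1-k)$, i.e.\ the contribution of this event is $\le 0$. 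Your case analysis of the algorithm's memory rule is faithful, and the observation that the number of $\{c_2\}$ singletons is conserved (so $P(A)>0$ and the drift is strictly negative whenever $m_1\ge 1$) correctly supplies the strict part. What your approach buys is generality in $d$ and a transparent explanation of why $m_2>m_1$ is the operative hypothesis; what the paper's buys is an explicit closed form for the drift. The only place you under-argue is the ``coupling/symmetry comparison,'' which you flag as the main obstacle --- it is in fact the one-line reflection computation above, so you should write it out rather than leave it as a gesture. Both arguments lean identically on Assumption \ref{assum:main} for the multinomial neighbourhood model and on Foster's criteria to turn negative drift into finite-time absorption at $m_1=0$.
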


\begin{proof}
	We know that $v_1 = 0$ in the second phase. In mesh network, each agent has four neighbors, i.e., $d=4$. Suppose that agent $i$ wakes up at time $t$. We denote the number of agents in $N(i) \cup \{ i \}$ belonging to groups $V_2$, $M_1$, and $M_2$ as $v_2^i$, $m_1^i$, and $m_2^i$, respectively. Hence, we know that : $v_2^i + m_1^i + m_2^i = d+1$ and total number of different assignments to $v_2^i$, $m_1^i$, and $m_2^i$ is equal to $ \left( \begin{array}{cc} d + 1 + 3 - 1 \\ 3 - 1 \end{array} \right) = 21 $. Furthermore, the probability of observing assignment $(v_2^i , m_1^i , m_2^i)$ is :
	\begin{align}
	P(v_2^i , m_1^i , m_2^i) = \frac{(d+1)!}{v_2^i! \times m_1^i! \times m_2^i!} \times p_{V_2}^{v_2^i} \times p_{M_1}^{m_1^i} \times p_{M_2}^{m_2^i},
	\end{align}
	where $p_{V_2} = \frac{v_2}{n}$, $p_{M_1} = \frac{m_1}{n}$, and $p_{M_2} = \frac{m_2}{n}$ according to Assumption \ref{assum:main}.
	Based on the above arguments, we can obtain the expectation of $L(C^+) - L(C)$ as follows:
	\begin{align}
	\sum\limits_{\substack{21 possible\\assignments\\of(v_2^i , m_1^i , m_2^i)}} P(v_2^i , m_1^i , m_2^i) \times \Delta(v_2^i , m_1^i , m_2^i),
	\end{align}
	where $\Delta(v_2^i , m_1^i , m_2^i)$ shows the change in the value of $m_1$ after executing interaction among $N(i) \cup \{i\}$ for the assignment $(v_2^i , m_1^i , m_2^i)$. We can simplify the above equation by setting the values of $\Delta(v_2^i , m_1^i , m_2^i)$ for each possible assignment of $(v_2^i , m_1^i , m_2^i)$ and obtain:
	\begin{align}
	-c(m_1(m_2({m_2}^3- {m_1}^3) + 4{m_1}^3 v_2 + 4m_1 {m_2}^2(m_2 - m_1) +  \nonumber \\ 12 {m_1}^2 m_2 v_2 + 6 {m_1}^2 {v_2}^2  + 12 m_1 {m_2}^2 v_2 +  12 m_1 m_2 {v_2}^2 +  \nonumber \\ 4 m_1 {v_2}^3 + 4{m_2}^3 v_2 + 6{m_2}^2 {v_2}^2 + 4m_2 {v_2}^3 + {v_2}^4)),
	\end{align}
	where $c$ is a positive constant.
	Based on Assumption \ref{assum:main} ($m_2>m_1$), it can be shown that the above equation is always non-positive and this completes the proof.
\end{proof}

\section{Time and Message Complexities}
\label{sec:time}
We analyze the first and second phases of Accelerated B-DMV 1 by mean-field approximation \cite{pastor2015epidemic} and compare it with our simulation. In order to simplify analysis, we assume that there are only two choices $c_1$ and $c_2$ and each agent has exactly $d$ neighbors. Furthermore, we assume that the value set $\{c_1\}$ and $\{c_2\}$ are uniformly spread in the network at any time $t$. Let $x(t)$ and $y(t)$ be the number of agents whose value sets are $\{c_1\}$ and $\{c_2\}$ at time $t$, respectively. Moreover, assume that $c_2$ is in majority.

Recall that $v_1^i$ and $v_2^i$ are the number of agents in $N(i) \cup \{i\}$ having value sets $\{c_1\}$ and $\{c_2\}$, respectively. The probability of event that $v_1^i=0$ or $v_2^i=0$ is given as follows:
\begin{equation}
\begin{split}
\Pr(v_1^i=0 \lor v_2^i=0)&=\big(\Pr(v_1^i=0)+\Pr(v_2^i=0)\\& \qquad -(\Pr(v_1^i=0 \land v_2^i=0)\big)\\
&=\Bigg(\Big(1-\frac{x(t)}{n}\Big)^{d+1}+\Big(1-\frac{y(t)}{n}\Big)^{d+1}\\& \qquad -\Big(1-\frac{x(t)+y(t)}{n}\Big)^{d+1}\Bigg).
\end{split}
\end{equation}

The average reduction in number of agents having value set $\{c_1\}$ would be: $\mathbb{E}[\min(v_1^i,v_2^i)]$ which is greater than $1-\Pr(v_1^i=0 \lor v_2^i=0)$.
Now, by mean-field approximation, we set the rate of reduction of $x(t)$ equal to the average reduction, i.e., $dx(t)/dt=-\mathbb{E}[\min(v_1^i,v_2^i)]$. Since $\mathbb{E}[\min(v_1^i,v_2^i)]$ is lower bounded by $1-\Pr(v_1^i=0 \lor v_2^i=0)$, we have:
\begin{align}
\label{eq:anal_first}
\begin{split}
\frac{dx(t)}{dt} \leq  -\Bigg(1- \Bigg(\Big(1-\frac{x(t)}{n}\Big)^{d+1}+\Big(1-\frac{y(t)}{n}\Big)^{d+1}-\\ \Big(1-\frac{x(t)+y(t)}{n}\Big)^{d+1}\Bigg)\Bigg).
\end{split}
\end{align}


The result of simulation and the differential equation based on above inequality against number of interactions for a mesh network with $100$ agents and three different $\rho_2$s ($\rho_2=0.7$, $\rho_2=0.8$, and $\rho_2=0.9$) are given in Figure \ref{fig3}. The simulation result is averaged over $100$ different runs of Accelerated B-DMV 1 algorithm. By comparison of the two curves, it can be seen that our analysis is fairly a good upper bound on the runtimes of the algorithm. 
\begin{figure}[t]
	\begin{center}
		\includegraphics[width=9cm]{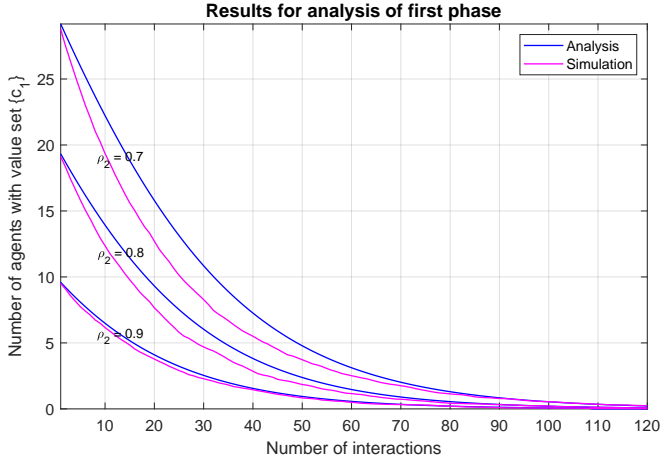}
		\caption{Results for analysis of the first phase.}
		\label{fig3}
	\end{center}
\end{figure}

We approximate the right hand side of \eqref{eq:anal_first} by keeping only first three terms of binomial expansion:

\begin{align}
\begin{split}
-\Bigg(&1-
\Bigg(\Big(1-\frac{x(t)}{n}\Big)^{d+1}+\Big(1-\frac{y(t)}{n}\Big)^{d+1}- \\ &\qquad\qquad\Big(1-\frac{x(t)+y(t)}{n}\Big)^{d+1}\Bigg)\Bigg) \\ &\qquad\approx -\frac{d(d+1)}{n^2} \times x(t) \times y(t) \\ &\qquad= -\frac{d(d+1)}{n^2} \times x(t) \times (x(t)+y(0)-x(0)),
\end{split}
\end{align}
where the last equality is due to the fact that that the difference between $y(t)$ and $x(t)$ is always a constant $y(0)-x(0)$. Thus, we can write:
\begin{align}
\begin{split}
 \frac{dx(t)}{dt} &\approx -\frac{d(d+1)}{n^2} \times x(t) \times (x(t)+y(0)-x(0)), 
 \\ \	\Rightarrow x(t) & = \frac{y(0)-x(0)}{\frac{y(0)}{x(0)}\times e^{\frac{d(d+1)}{n^2}(y(0)-x(0))t} - 1}.
\end{split}
\end{align}

From the above equation, we obtain an upper bound on time $t_{c_1}$ which is the time that $x(t)$ becomes less than or equal to $1$:

\begin{align}
\begin{split} 
t_{c_1} \leq \frac{n}{d(d+1)(\rho_2-\rho_1)}\Bigg({\log\Big(\frac{\rho_1}{\rho_2}\Big) + \log\Big(n(\rho_2-\rho_1)+1\Big)}\Bigg).
\end{split}
\label{eq:tc1}
\end{align}


Now we study  the second phase. We know that there is no value set of size one from minority choices $c_1$ at the beginning of the second phase. Here we show number of agents whose memories are not $\{ c_2 \}$ by $z(t)$. At the beginning of second phase, we assume that $z(t)$ is less than $x(0)$. By using mean field approximation, we have the following differential equation:
\begin{align}
\frac{dz(t)}{dt}  \leq  - \frac{z(t)}{n} \Bigg(1-\Big(1-\frac{y(0)-x(0)}{n}\Big)^d\Bigg),
\label{eq:1}
\end{align}  
where the absolute value of right hand side of above inequality is the probability that the agent initiating the interaction has the memory of $\{c_1\}$ and there is at least one agent in its neighbor set having the value set of $\{c_2\}$.

If we assume that $\frac{y(0)-x(0)}{n}$ is small enough, we can simplify the above equation in the following form:
\begin{align}
\frac{dz(t)}{dt}  \leq  - \frac{d}{n^2}\times z(t) (y(0) - x(0)),
\end{align} 

In Figure \ref{fig4}, the result of the same simulation for the second phase is given. As can be seen, the result of analysis is an upper bound for the one from simulation.

\begin{figure}[t]
	\begin{center}
		\includegraphics[width=9cm]{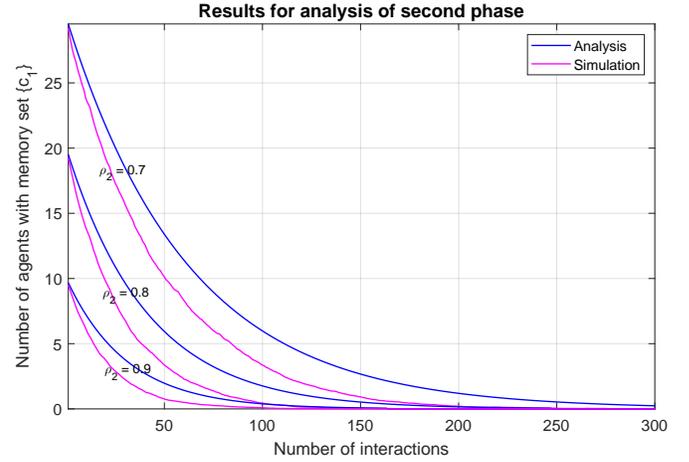}
		\caption{Results for analysis of the second phase.}
		\label{fig4}
	\end{center}
\end{figure}

From the above equation, we can obtain an upper bound on time $t_{c_2}$ which is the time that $z(t)$ becomes less than or equal to one:

\begin{align}
\begin{split}
t_{c_2} \leq \frac{n}{d \times (\rho_2 - \rho_1)} \log(n \rho_1).
\end{split}
\label{eq:tc2}
\end{align}

Combining \eqref{eq:tc1} and \eqref{eq:tc2}, the total time complexity is in the order of $O(n\log(n\rho_2)/(d(\rho_2-\rho_1)))$. Moreover, the number of transmitted messages is ($d+2$)-times of number of interactions (see the explanations at the beginning of Section \ref{sec:exp}). Hence, the message complexity would be in the order of $O(n\log(n\rho_2)/(\rho_2-\rho_1))$.


\section{Experiments}
\label{sec:exp}
In this section, we evaluate the performance of proposed algorithms in terms of time and message complexities.
In pairwise population protocols, in each interaction, two messages are transferred. However, in broadcast version, in each interaction, there are three steps. First, the agent which initiates the interaction, broadcasts a message to all of its neighbors. Then, all the neighbors send their value sets to the initiator. Finally, the initiator broadcasts the new value sets of all neighbors. Thus, in broadcast population protocol, in each interaction, $d + 2$ number of messages are transmitted where $d$ is the degree of each agent. In our simulations, we report the number of transmitted messages based on this observation.

\begin{figure*}
	\centering
	\begin{subfigure}[b]{0.48\linewidth}
		\includegraphics[width=\linewidth]{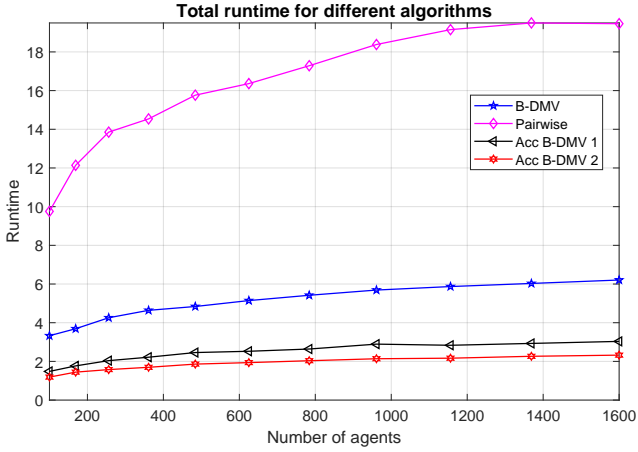}
		\caption{Runtime versus number of agents}
	\end{subfigure}
	\begin{subfigure}[b]{0.48\linewidth}
		\includegraphics[width=\linewidth]{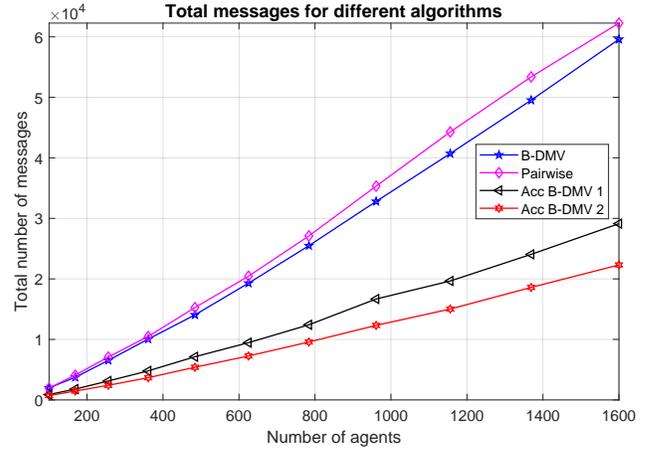}
		\caption{Total number of messages versus number of agents}
	\end{subfigure}
	\caption{Runtime and transmitted messages for different algorithms in a mesh network topology with $\rho_2 = 0.7$}
	\label{fig5}
\end{figure*}



\begin{figure}[h]
	\begin{center}
		\includegraphics[width=9.2cm]{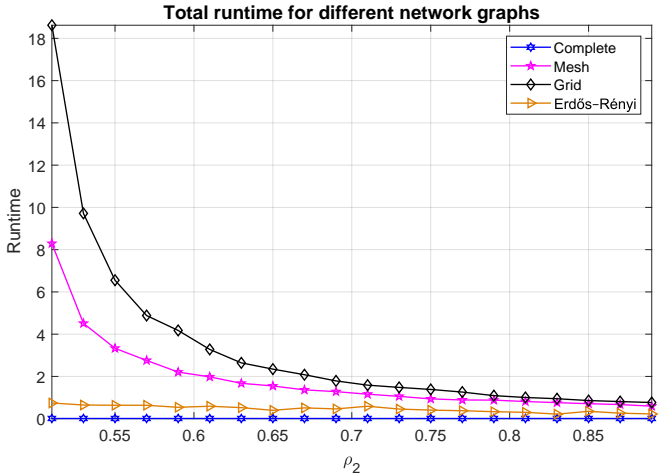}
		\caption{Runtime of Accelerated B-DMV 2 algorithm until convergence for different  network topologies.}
			\label{fig6}
	\end{center}
\end{figure}
We simulated four different algorithms (Pairwise, B-DMV, Accelerated B-DMV 1, and Accelerated B-DMV 2) on mesh networks with different number of agents (from $100$ to $1600$). The results are given in Figure \ref{fig5}. 
As we expected, for the time complexity, Accelerated B-DMV 2 has the best performance and Accelerated B-DMV 1 is better than B-DMV. Moreover, the time complexities of all algorithms scale with the diameter of network which is in the order of $\sqrt{n}$. Note that we need to wait for at least the diameter of network so that all nodes have the same memory.
For the message complexity, Accelerated B-DMV 2 algorithm again has the best performance and Accelerated B-DMV 1 is better than B-DMV. Moreover, the message complexity of B-DMV is very close to pairwise one. 


We also evaluated the performance of Accelerated B-DMV 2 algorithm on different topologies (Complete graphs, mesh graphs, grid, and Erd\H{o}s-R\'{e}nyi graphs) with $100$ number of agents. Here, we report time complexities of this algorithm on the mentioned topologies for different $\rho_2$ in Figure \ref{fig6}. As we expected, the algorithm converges faster in networks with smaller diameters (such as complete graphs or Erd\H{o}s-R\'{e}nyi graphs). Moreover, these results are consistent with the analysis in Section \ref{sec:time} where the time complexity is a factor of $1/(\rho_2-\rho_1)=1/(2\rho_2-1)$.

\section{Conclusion and Future Works}
In this paper, we introduced broadcasting population protocol model for the networks that each agent can send messages to a subset of agents simultaneously such as in wireless networks. We proposed two distributed algorithms for the problem of multi-choice majority voting in this model. We proved the correctness of these algorithms and analyzed their time and message complexities. Experimental results showed that for different network topologies, the proposed algorithms have better performance in both time and message complexities with respect to previous works proposed for pairwise population protocols. As a future work, one can study other problems such as leader election or consensus in this model. Moreover, it is interesting to check whether we can speed up the runtime of algorithms designed for pairwise population protocols by adopting them for broadcasting population protocols.


%

%
%
%
%
%

\ifCLASSOPTIONcaptionsoff
  \newpage
\fi



%

\bibliographystyle{IEEEtran}
\bibliography{ref}

%


%








\end{document}